

\documentclass[conference]{IEEEtran}

\usepackage{amsmath,amssymb,amsthm}
\usepackage{epsfig,graphicx}
\usepackage[svgnames,rgb]{xcolor}
\usepackage[color=yellow]{pdfcomment}
\usepackage{colortbl,hhline}
\usepackage{algorithmic,algorithm,lipsum}

\usepackage[mathscr]{eucal}
%
\usepackage{amsbsy}
%
\usepackage{bm}
\usepackage{fixltx2e}
\MakeRobust{\overrightarrow}

%
%


\newcommand{\Sec}[1]{\hyperref[sec:#1]{\S\ref*{sec:#1}}} 
\newcommand{\Eqn}[1]{\hyperref[eq:#1]{(\ref*{eq:#1})}} 
\newcommand{\Fig}[1]{\hyperref[fig:#1]{Figure~\ref*{fig:#1}}} 
\newcommand{\Tab}[1]{\hyperref[tab:#1]{Table~\ref*{tab:#1}}} 
\newcommand{\Thm}[1]{\hyperref[thm:#1]{Theorem~\ref*{thm:#1}}} 
\newcommand{\Lem}[1]{\hyperref[lem:#1]{Lemma~\ref*{lem:#1}}} 
\newcommand{\Prop}[1]{\hyperref[prop:#1]{Property~\ref*{prop:#1}}} 
\newcommand{\Cor}[1]{\hyperref[cor:#1]{Corollary~\ref*{cor:#1}}} 
\newcommand{\Def}[1]{\hyperref[def:#1]{Definition~\ref*{def:#1}}} 
\newcommand{\Alg}[1]{\hyperref[alg:#1]{Algorithm~\ref*{alg:#1}}} 
\newcommand{\Ex}[1]{\hyperref[ex:#1]{Example~\ref*{ex:#1}}} 





\newcommand{\rs}{ {\tt reshape}}


\newcommand{\V}[1]{{\bm{\mathbf{\MakeLowercase{#1}}}}} 
\newcommand{\mb}[1]{\mathbb{#1}}
\newcommand{\mc}[1]{\mathcal{#1}}

\newcommand{\M}[1]{{\bm{\mathbf{\MakeUppercase{#1}}}}} 

\newcommand{\T}[1]{\boldsymbol{\mathscr{\MakeUppercase{#1}}}} 


\newcommand{\la}{\langle}
\newcommand{\ra}{\rangle}



\title{Group-invariant Subspace Clustering}

\author{\IEEEauthorblockN{Shuchin Aeron}
\IEEEauthorblockA{Dept. of Electrical and Computer Engineering\\
Tufts University, Medford, MA 02155\\
Email: shuchin@ece.tufts.edu}
\and
\IEEEauthorblockN{Eric Kernfeld}
\IEEEauthorblockA{Dept. of Statistics\\
University of Washington, Seattle, WA\\
Email: eric.kern13@gmail.com}}

\newtheorem{theorem}{Theorem}[section]
\newtheorem{lemma}[theorem]{Lemma}

\newtheorem{example}[theorem]{Example}

\begin{document}
\maketitle

\begin{abstract}
In this paper we consider the problem of group-invariant subspace clustering where the data is assumed to come from a \emph{union of group-invariant subspaces} of a vector space, i.e. subspaces which are invariant with respect to action of a given group. Algebraically, such group-invariant subspaces are also referred to as submodules. Similar to the well known Sparse Subspace Clustering approach where the data is assumed to come from a union of subspaces, we analyze an algorithm which, following a recent work \cite{Kernfeld:2014wta}, we refer to as Sparse Sub-module Clustering (SSmC). The method is based on finding group-sparse self-representation of data points. In this paper we primarily derive general conditions under which such a group-invariant subspace identification is possible. In particular we extend the geometric analysis in \cite{Soltanolkotabi:2012ia} and in the process we identify a related problem in geometric functional analysis.
\end{abstract}

\section{Introduction}
In this paper we consider the problem of group-invariant unsupervised clustering of data points. To give some examples where such a scenario may arise, consider the problem of clustering images \cite{Basri:2003wd,Chen:2012eo,Hastie:1998wo}, where due to shifts in camera position or minor changes in the pose, the images can be arbitrarily shifted or in some cases rotated. In such cases either a pre-processing step is performed to align/center the images, after which one can employ unsupervised clustering approaches such as in \cite{Elhamifar:2012uz}. The problem that we address in this paper is whether such a pre-processing step can be eliminated given the group (essentially a set of transformations that commute with each other) with respect to which, we would like to keep the invariance in clustering the data. 
In this context, we note a related work \cite{Kernfeld:2014wta} in which, the authors propose a (horizontal) shift-invariant clustering of images using linear algebraic constructs developed in \cite{Kilmer:2013kx}. This work extends the analysis as well as the concept therein. For example, we don't assume that the submodules need to be \emph{free} and disjoint. In addition the proposed method works for any group. We also note that the geometric analysis presented in this paper can also be potentially applied and related to the performance bounds on group-sparse recovery problems, such as the block sparse recovery problem considered in \cite{Elhamifar:2012}. Nevertheless, our framework is group-theoretic and we derive performance bounds in terms of a novel notion of \emph{group-subspace incoherence}. 

The rest of the paper is organized as follows. In Section \ref{sec:prob_setup} we describe the problem set-up where the data is modeled as coming from a union of group-invariant subspaces or submodules. In Section \ref{sec:alg} we present the algorithm for performing sparse sub-module clustering and analyze it with respect to a semi-random model in Section \ref{sec:analysis}. In Section \ref{sec:sims} we present simulation results to show the effectiveness of the proposed approach. Finally we provide conclusions and future research directions in Section \ref{sec:conclude}.

\textbf{Notation}: In the following we will use capital boldface letters $\M{X}$ to denote matrices/2-D data points, lowercase boldface letters $\V{x}$ to denote the column or row vectors. For a matrix $\M{X}_i$ denotes the $i$-th column and $\M{X}_{\{i\}}$ denotes the $i$-th row. Occasionally we will need to form 3-D arrays and we denote them by uppercase calligraphic bold-face letters $\T{X}$. Additional notation is introduced as needed.


\section{Problem set-up}
\label{sec:prob_setup}

Denote by $\mb{R}^n$ the real vector space over $\mb{R}$. Given an abelian (commutative) group $\mb{G}$ of order $N_G$, which acts on $\mb{R}^n$ through its \emph{linear representation} \cite{Serre_book} $\M{L}_g, g \in \{1,2,...,N_G\}$, $\M{L}_g: \mb{R}^n \rightarrow \mb{R}^n$ such that $\M{L}_{g_2}\M{L}_{g_1}\V{v} = \M{L}_{g_1}\M{L}_{g_2}\V{v}, \forall \V{v} \in \mb{R}^n$. Then $\mb{R}^n$ is said to be a $\mb{G}$-module\footnote{Note $\M{L}_g \in GL_n(\mb{R}^{n})$, where $GL_n(V)$ denotes the general linear group, i.e. the group of invertible mappings on the vector space $V$}.



\textbf{Submodule}: A subspace $\mb{S}$ of $\mb{R}^n$ such that for all $\V{u} \in \mb{S}$, $\M{L}_g \V{u} \in \mb{S}$ is called as a submodule (w.r.t. $\mb{G}$). Note that a submodule is essentially a $\mb{G}$-invariant subspace of $\mb{R}^n$. 

Note that without loss of generality $\M{L}_1$ can be taken to be the identity element so that $\M{L}_1 = \M{I}_{n\times n}$, the $n \times n$ identity matrix.  

\begin{example}
Let us illustrate the algebraic set-up through an example. We consider the setting of \cite{Kernfeld:2014wta} where data points are considered as images of size $n_1 \times n_2$ (can be embedded in a vector space $\mb{R}^{n_1 n_2}$). The group is the cyclic group of shifts along the columns of the images and is of order $N_G = n_2$. The action of this group on $\mb{R}^{n_1 n_2}$ can be captured by $N_G$ matrices obtained by taking the Kronecker product of identity matrix of size $n_1 \times n_1$ with the $n_2$ matrices for cyclically permuting the columns. It can be seen that the generative model for the submodules used in \cite{Kernfeld:2014wta} is precisely the one as outlined above.
\end{example}

\begin{example}
For the previous example, one can consider a discretized rotation group rotating the images around the center. In particular one can take the direct product of the shift and rotation groups\footnote{Note that direct product of Abelian groups is Abelian, but direct product of cyclic groups is not cyclic}, increasing the order of the group to capture shift and rotational invariance.
\end{example}

We now present the formal problem statement.

\subsection{Problem statement: The union of submodules model} 

We are given a set of data points collected as columns of a matrix $\M{X}$, i.e. $\M{X}_{i}, i = 1,2,...,N$ such that, $\M{X}_i \in \bigcup_{\ell =1}^{L} \mb{S}_{\mb{G}}^{(\ell)}$, where $\mb{S}_{\mb{G}}^{(\ell)}$ is a $\mb{G}$-invariant subspace of dimension $d_\ell$, i.e. submodule of dimension $d_\ell$, for $\ell = 1,2,...,L$. The problem is to identify $\mb{S}_{\mb{G}}^{(\ell)}$ and cluster the points such that within each cluster the points belong to the same submodule.

We now present an algorithm for solving this problem. The main approach is very similar to that of sparse subspace clustering albeit \emph{\textbf{we make use of the group structure in identifying the submodules}}.

\subsection{The Algorithm: Sparse Submodule Clustering (SSmC)}
\label{sec:alg}

For the submodule clustering we propose the following algorithm outlined below. 

\begin{enumerate}
\item For each $i$ solve the following convex optimization problem, which is essentially computing group-subspace affinity in self-representation. 
\begin{align}
&\min \| \rs(\V{c}_i)\|_{1,2} \nonumber \\
& \mbox{s.t.}\,\, \M{X}_{-i,\mb{G}}\V{c}_i = \M{X}_i \,\, 
\end{align}
where
\begin{itemize}
\item $\| \cdot \|_{1,2}$ denotes the group-sparse norm equal to the $\ell_1$ norm of the vector of $\ell_2$ norms of the \emph{rows}.
\item $\M{X}_i$ denotes the $i$-th column of $\M{X}$.
\item $\M{X}_{-i}$ denotes the matrix with the $i$-th column removed. 
\item $\M{X}_{-i,\mb{G}}$ is an $n \times (N-1)N_G$ matrix $=  \left[ \M{L}_1 \M{X}_{-i}, \M{L}_2 \M{X}_{-i}, ..., \M{L}_{N_G} \M{X}_{-i} \right]$
\item For \emph{any} vector $\V{c}$ (row or column) of size $K N_G$, $\rs(\V{v})$ is a matrix of size $K \times N_G$, where the first column corresponds to the first $K$ elements of $\V{v}$, the second column corresponds to the next $K$ elements and so on. 
\end{itemize}

\item Form an $N \times N$ affinity matrix $\M{C}$, where the $i$-th column $\M{C}_{i}$ is the vector consisting of the $\ell_2$-norms of the rows of $\rs(\V{c}_i)$. 

\item Let $\M{W} = |\M{C}| + |\M{C}|^\top$ and perform spectral clustering \cite{njwspectral,Luxburg:2007dq}  using $\M{W}$ as the affinity or the weight matrix. 

\item Output the clusters. 
\end{enumerate}

\section{Analysis of the algorithm}
\label{sec:analysis}

In order to derive meaningful performance guarantees for the proposed algorithm, we consider the following generative model. Let $\M{X}^\ell$ denote the set of vectors in $\M{X}$ which belong to submodule $\ell$. Let $$ \M{X}^\ell = \M{Q}_{\mb{G}}^{(\ell)} \M{A}^{\ell} ,$$ where the $N_\ell$ columns of $\M{A}^\ell$ are drawn from the unit sphere $\mc{S}^{d_\ell-1}$ and $\M{Q}_{\mb{G}}^{(\ell)}$ is a matrix with orthonormal columns, whose columns span a $\mb{G}$-invariant subspace. 

Our analysis will rest on considering the following primal and dual optimization problems and their solutions. 
\begin{align}
\mc{P}(\V{y},\M{M}): \,\, &\min \| \rs(\V{c})\|_{1,2} \\
& \mbox{s.t.}\,\, \M{M}\V{c} = \V{y}
\end{align}
where $\M{M}$ is a matrix of size $n \times N N_G$. The dual problem can be written as, 
\begin{align}
\mc{D}(\V{Y},\M{M}): \,\, &\max \,\, \la \V{Y}, \V{\lambda}\ra \\
& \mbox{s.t.}\,\, \|\rs(\M{M}^{\top}\V{\lambda})\|_{\infty,2} \leq 1
\end{align}
Here $\| \M{X} \|_{\infty,2}$ norm of a matrix $\M{X}$ is the $\ell_\infty$ norm of the vector of $\ell_2$ norm of the rows. 

\textbf{Notation}: In the following we use $\M{C}_{\{i\}}$ to denote the $i$-th row of the matrix $\M{C}$. For an $n \times N N_G$ matrix $\M{M}$, given a set $S \subset \{1,2,...,N\}$, the matrix $\M{M}_S$ is a $n \times |S|\cdot N_G$ matrix, formed as follows: First reshape $\M{M}$ to a 3-D array $\T{M} \in \mb{R}^{n \times N \times N_G}$.  Form a $n \times |S| \times N_G$ 3-D array $\T{M}_S$ by taking the lateral slices $\T{M}(:,i,:), i \in S$. Then reshape the resulting 3-D array $\T{M}_S$ back to a $n \times |S|N_G$ matrix $\M{M}_S$. 

We have the following Lemma. 
\begin{lemma}
\label{lem:prim_dual}
If there exists,
\begin{enumerate}
\item A $\V{C}$ satisfying $\V{Y} = \M{M}\V{C}$ such that the row-support (the number of non-zero rows) $S$ of $\M{C}= \rs(\V{c})$ satisfies $S \subseteq T$, and
\item A dual certificate $\V{\lambda}$ satisfying 
\begin{align}
\label{eq:c1} [\rs(\M{M}_{S}^{\top} \M{\lambda})]_{\{i\}}  = \frac{\M{C}_{\{i\}}}{\|\M{C}_{\{i\}}\|_2} \\
\label{eq:c2}  \|\rs(\M{M}_{T\cap S^c}^{\top} \V{\lambda})\|_{\infty,2} \leq 1 \\
\label{eq:c3} \|\rs(\M{M}_{T^c}^{\top}\V{\lambda})\|_{\infty,2} < 1
\end{align}
\end{enumerate}
then all optimal solutions $\V{Z}$ to $\mc{P}(\V{Y},\M{M})$ obey that the rows of the matrix $\M{Z} = \rs(\V{z})$ corresponding to set $T^c$ are zero. 
\end{lemma}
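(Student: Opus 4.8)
The plan is to run the standard primal--dual witness argument, adapted to the mixed $\ell_1/\ell_2$ (group) norm $f(\V{c}) := \|\rs(\V{c})\|_{1,2}$. The first step is to record the subdifferential: writing $\M{C} = \rs(\V{c})$, a matrix $\M{G}$ (identified with a vector of the same length as $\V{c}$ through $\rs$) lies in $\partial f(\V{c})$ exactly when $\M{G}_{\{i\}} = \M{C}_{\{i\}}/\|\M{C}_{\{i\}}\|_2$ for every row $i$ with $\M{C}_{\{i\}}\neq 0$, and $\|\M{G}_{\{i\}}\|_2 \le 1$ for every row $i$ with $\M{C}_{\{i\}}=0$; this is just separability of $\|\cdot\|_{1,2}$ over rows together with the subdifferential of $\|\cdot\|_2$. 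The key bookkeeping fact is that, by the way $\M{M}_A$ is defined through the $3$-D reshaping, the rows of $\rs(\M{M}^{\top}\V{\lambda})$ indexed by a set $A \subseteq \{1,\dots,N\}$ are exactly the rows of $\rs(\M{M}_A^{\top}\V{\lambda})$. Since $S\subseteq T$, the sets $S$, $T\cap S^c$, $T^c$ partition $\{1,\dots,N\}$, and the rows of $\M{C}$ outside $S$ vanish; so conditions \eqref{eq:c1}, \eqref{eq:c2}, \eqref{eq:c3} say precisely that $\M{G} := \rs(\M{M}^{\top}\V{\lambda})$ satisfies the subgradient characterization at $\V{C}$, with the \emph{extra} strict bound $\|\M{G}_{\{i\}}\|_2 < 1$ on $T^c$. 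Hence $\M{M}^{\top}\V{\lambda} \in \partial f(\V{C})$.

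The second step turns this into optimality of $\V{C}$ and extracts the equality case. For any $\V{z}$ feasible for $\mc{P}(\V{Y},\M{M})$, the subgradient inequality gives
\begin{align}
f(\V{z}) \;\ge\; f(\V{C}) + \langle \M{M}^{\top}\V{\lambda},\, \V{z}-\V{C}\rangle \;=\; f(\V{C}) + \langle \V{\lambda},\, \M{M}\V{z}-\M{M}\V{C}\rangle \;=\; f(\V{C}),
\end{align}
since $\M{M}\V{z}=\V{Y}=\M{M}\V{C}$. Thus $\V{C}$ is optimal, and if $\V{z}$ is itself optimal then the inequality above is an equality. Writing $\M{Z}=\rs(\V{z})$, cancelling the rows in $S$ via \eqref{eq:c1} (each contributes $\langle \M{C}_{\{i\}}/\|\M{C}_{\{i\}}\|_2, \M{C}_{\{i\}}\rangle = \|\M{C}_{\{i\}}\|_2$ on both sides), and using $\M{C}_{\{i\}}=0$ off $S$, the equality collapses to
\begin{align}
\sum_{i=1}^{N} \|\M{Z}_{\{i\}}\|_2 \;=\; \sum_{i=1}^{N} \langle \M{G}_{\{i\}},\, \M{Z}_{\{i\}}\rangle .
\end{align}

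The third step is a term-by-term comparison. For every $i$, Cauchy--Schwarz together with $\|\M{G}_{\{i\}}\|_2\le 1$ gives $\langle \M{G}_{\{i\}},\M{Z}_{\{i\}}\rangle \le \|\M{Z}_{\{i\}}\|_2$, so equality of the two sums forces equality in each summand. For $i\in T^c$ we have the strict bound $\|\M{G}_{\{i\}}\|_2<1$, whence $\langle \M{G}_{\{i\}},\M{Z}_{\{i\}}\rangle \le \|\M{G}_{\{i\}}\|_2\,\|\M{Z}_{\{i\}}\|_2 < \|\M{Z}_{\{i\}}\|_2$ unless $\M{Z}_{\{i\}}=0$. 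Hence $\M{Z}_{\{i\}}=0$ for all $i\in T^c$, which is the claim.

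I expect the main difficulty to be organizational rather than conceptual: establishing the reshaping/indexing identity so that the three dual conditions on $\M{M}_S$, $\M{M}_{T\cap S^c}$, $\M{M}_{T^c}$ genuinely reassemble into one subgradient of $\|\rs(\cdot)\|_{1,2}$ at $\V{C}$, and keeping straight where the norm $\|\cdot\|_{1,2}$ versus its dual $\|\cdot\|_{\infty,2}$ appears. Once that is pinned down, the convex-analytic core --- the subgradient inequality, its equality case, and the role of the strict inequality on $T^c$ --- is routine.
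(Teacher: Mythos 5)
Your proof is correct and is essentially the argument the paper intends: the paper's proof simply cites Lemma 7.1 of Soltanolkotabi--Cand\`es with the ${\tt sgn}$ function replaced by the row-normalization $\M{C}_{\{i\}}/\|\M{C}_{\{i\}}\|_2$, and your write-up is exactly that dual-certificate/subgradient argument carried out in full for the $\|\cdot\|_{1,2}$ norm, including the equality-case analysis that forces $\M{Z}_{\{i\}}=0$ on $T^c$ via the strict bound in \eqref{eq:c3}.
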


\begin{proof}
The proof follows by proceeding through the same arguments as that of Lemma 7.1 in \cite{Soltanolkotabi:2012ia} and substituting the ${\tt sgn}$ function with $$[{\tt sgn}(\M{C})]_{\{i\}} = \frac{\M{C}_{\{i\}}}{\|\M{C}_{\{i\}}\|_2}.$$
\end{proof}

Based on this Lemma we now derive a sufficient condition for submodule identification.

\subsection{A geometric condition for submodule identification}

In the following we will assume that the optimization problems have a unique solution. This assumption is to avoid unnecessary technicalities and does not affect the core analysis and the main results.

In the following we will assume that one is given $N_\ell$ points for submodule $\mb{S}_\ell$. Closely mirroring the analysis in \cite{Soltanolkotabi:2012ia} we proceed in steps as follows.

\begin{itemize}
\item Let $\V{c}_{i}^{\ell}$ be the (unique) solution to the primal program $\mc{P}({\V{A}}_{i}^{\ell}, {\M{A}}_{-i,\mb{G}}^{\ell})$ where ${\V{A}}_{i}^{\ell}$ denotes the $i$-th column of ${\M{A}}^{\ell}$ and,
\begin{align} 
{\M{A}}_{-i,\mb{G}}^{\ell} = \M{Q}_{\mb{G}}^{\ell \top}[ \M{L}_1 \M{X}_{-i}^{\ell}, \M{L}_2 \M{X}_{-i}^{\ell},..., \M{L}_{N_G} \M{X}_{-i}^{\ell}]
\end{align}

\item Let $\V{\lambda}_i^\ell$ be the solution to the dual problem, $\mc{D}({\V{A}}_{i}^{\ell}, {\M{A}}_{-i,\mb{G}}^{\ell})$.
\item Define the \emph{group-dual direction} for points $\V{a}_{i}^{\ell}$, $i = 1,2,...,N_\ell$ via 
\begin{align}
{\V{v}}_i^\ell = \M{Q}_{\mb{G}}^{\ell} \frac{\V{\lambda}_i^\ell}{\|\V{\lambda}_i^\ell \|_2}
\end{align}
\end{itemize}

Let $\V{\nu}_i^\ell = \M{Q}_{\mb{G}}^{\ell} \V{\lambda}_i^\ell$.  Then it is easy to see that 
\begin{align*}
&\rs(\V{c}) \nonumber \\
& = [ 0_{N_1 \times N_G}; 0_{N_2 \times N_G};...; \rs(\V{c}_i^\ell); ...; 0_{N_{K} \times N_G} ]\, ,
\end{align*} 
and $\V{\nu}_{i}^{\ell}$ satisfy conditions of Equations (\ref{eq:c1}) and (\ref{eq:c2}) of Lemma \ref{lem:prim_dual}. In order to satisfy the inequality (\ref{eq:c3}) of the Lemma \ref{lem:prim_dual} one needs to satisfy,
\begin{align}
\| \rs(\M{X}_{\mb{G}}^{(k)\top} {\V{\nu}}_i^\ell)\|_{\infty,2} < 1
\end{align}
for all $(k) \neq \ell$ and for all $i= 1, 2, ..., N_\ell$.  Equivalently these conditions translate to the condition, 
\begin{align}
\left\| \M{X}_{j,\mb{G}}^{(k) \top} \V{v}_i^\ell\right\|_{2} \|\V{\lambda}_i^\ell \|_2< 1
\end{align}
for all $j = 1,2,...,N_k$. Now we need to bound $\|\V{\lambda}_i^\ell \|_2$. Note that $\V{\lambda}_i^\ell$ lies in the set $P$ where 
\begin{align} 
P = \{ \V{\lambda}_{i}^{\ell}: \|\M{A}_{-i, \mb{G}}^{\ell \top} \lambda_i^\ell\|_{\infty,2} \leq 1\}.
\end{align} 
The circumscribing radius $\mc{R}(P)$ of this centro-symmetric set is a bound on $\|\V{\lambda}_i^\ell \|_2$. Note that the polar $P^\circ$ of this set is given by, 
\begin{align} 
P^\circ = \{ \V{z}: \V{z} = \M{A}_{-i, \mb{G}}^{\ell} \V{b},\,\, : \|\rs(\V{b})\|_{1,2} \leq 1\}
\end{align} 
Using the polar duality \cite{Rockafellar} and the inverse relation between the in-radius $r(P^\circ)$ and $\mc{R}(P)$ \cite{Soltanolkotabi:2012ia} we obtain, 
\begin{align} 
\|\V{\lambda}_i^\ell \|_2 \leq \mc{R}(P) = \frac{1}{r(P^\circ)}
\end{align} 
Therefore a sufficient condition for group-invariant subspace identification becomes, 
\begin{align} 
\label{eq:cond1}
\left\|\M{X}_{j,\mb{G}}^{(k) \top} \V{v}_i^\ell \right\|_{2} \leq r(P^\circ)
\end{align}
Note that $\V{\xi} = \M{X}_{j,\mb{G}}^{(k) \top} \V{v}_i^\ell$ is a vector of size $N_G$ with the $g$-th element $\V{\xi}_g$ given by $$\V{\xi}_g = \V{a}_j^\top \M{Q}_{\mb{G}}^{(k)\top} \M{L}_g^\top  \M{Q}_{\mb{G}}^{(\ell)}\frac{\V{\lambda}_i^\ell}{\|\V{\lambda}_i^\ell \|_2}.$$

Using this analysis, our main result is based on the following notion of \emph{affinity between submodules} --- Given $\M{L}_g, \forall g \in \mb{G}$, and $\M{Q}_{\mb{G}}^{\ell}$ and $\M{Q}_{\mb{G}}^{k}$, the \emph{submodule-affinity} between two $\mb{G}$-submodules is defined as,
\begin{align}
\sqrt{\sum_{g}\|\M{Q}_{\mb{G}}^{(k)\top} \M{L}_{g}^{\top} \M{Q}_{\mb{G}}^{(\ell)} \|_{F}^{2}}.
\end{align}
Note that this measure of affinity is measuring the total affinity between the submodules under the group action and in general is \emph{larger} than the affinity $ \|\M{Q}_{\mb{G}}^{(k)\top} \M{Q}_{\mb{G}}^{(\ell)} \|_{F}$ between the submodules treated as subspaces. 

Then one may wonder \textbf{why using the SSmC algorithm has any benefit over using the SSC algorithm?} We will attempt to address this question after deriving the identifiability conditions for a \emph{semi-random} generative model in the next section.

\section{Results for the semi-random model}

In the semi-random generative model the $N_\ell$ columns of $\M{A}^\ell$ are drawn uniformly randomly from the unit sphere $\mc{S}^{d_\ell-1}$.  For the semi-random model, we have the following result. 

\begin{figure*}
\begin{align}
\label{eq:main2}
P \left\{ \| \rs(\M{X}_{\mb{G}}^{(k)\top} {\V{v}}_i^\ell)\|_{\infty,2} \leq C_1(t,\Delta) \frac{\sqrt{\sum_{g}\|\M{Q}_{\mb{G}}^{(k)\top} \M{L}_{g}^{\top} \M{Q}_{\mb{G}}^{(\ell)} \|_{F}^{2}}}{\sqrt{d_\ell} \sqrt{d_k}}  \right\} \geq 1 - C_2(t,\Delta) e^{-2t}
\end{align}
\end{figure*}

\begin{theorem}
\label{thm:main}
Under the semi-random model for any two submodules $k$ and $\ell$, given $\Delta, t >0$, Equation (\ref{eq:main2}) holds with $$ C_2(t,\Delta) = \dfrac{4 N_G}{(N_k +1) \Delta^2 } e^{-2t}$$ and $$ C_1(t, \Delta) = 4 (\log(N_k+1) + \log \Delta + t).$$ 
\end{theorem}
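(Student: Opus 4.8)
The plan is to reduce the theorem to a concentration-of-measure statement about a single inner product $\V{a}_j^\top \M{Q}_{\mb{G}}^{(k)\top} \M{L}_g^\top \V{v}_i^\ell$ and then assemble the $N_G$ coordinates and the $N_k$ points via a union bound. First I would recall from the preceding analysis that $\V{\xi}_g = \V{a}_j^\top \M{Q}_{\mb{G}}^{(k)\top} \M{L}_g^\top \M{Q}_{\mb{G}}^{(\ell)} \frac{\V{\lambda}_i^\ell}{\|\V{\lambda}_i^\ell\|_2}$, so that the quantity to be controlled is $\|\rs(\M{X}_{\mb{G}}^{(k)\top}\V{v}_i^\ell)\|_{\infty,2} = \max_j \|\M{X}_{j,\mb{G}}^{(k)\top}\V{v}_i^\ell\|_2 = \max_j \big(\sum_g \xi_g^2\big)^{1/2}$. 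The key observation is that $\V{v}_i^\ell$ is, by construction, a fixed unit vector lying in the $d_\ell$-dimensional submodule $\mb{S}_{\mb{G}}^{(\ell)}$, and crucially it is \emph{statistically independent} of the points $\V{a}_j$ drawn in submodule $k$ (the leave-one-out / separate-submodule structure is what makes this work, exactly as in \cite{Soltanolkotabi:2012ia}). Hence, conditioning on $\V{v}_i^\ell$, each $\xi_g$ is the inner product of a fixed vector $\M{Q}_{\mb{G}}^{(k)\top}\M{L}_g^\top\V{v}_i^\ell \in \Real^{d_k}$ with a uniformly random point $\V{a}_j$ on $\mc{S}^{d_k-1}$.

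Next I would apply the standard spherical concentration bound: for a uniform $\V{a}$ on $\mc{S}^{d_k-1}$ and any fixed vector $\V{w}\in\Real^{d_k}$, $P\{|\V{a}^\top\V{w}| > \|\V{w}\|_2\, \tau/\sqrt{d_k}\}$ decays like $e^{-\tau^2/2}$ (or the comparable polynomial-in-$\tau$ times exponential form used in \cite{Soltanolkotabi:2012ia}). Applying this to the whole vector $\V{\xi}$ rather than coordinatewise — i.e., using that $\|\M{X}_{j,\mb{G}}^{(k)\top}\V{v}_i^\ell\|_2 = \|B^\top \V{a}_j\|_2$ where $B = \big[\M{Q}_{\mb{G}}^{(k)\top}\M{L}_1^\top\V{v}_i^\ell,\dots,\M{Q}_{\mb{G}}^{(k)\top}\M{L}_{N_G}^\top\V{v}_i^\ell\big]$ is a fixed $d_k\times N_G$ matrix — gives concentration of $\|B^\top\V{a}_j\|_2$ around a multiple of $\|B\|_F/\sqrt{d_k}$. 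Then I would bound $\|B\|_F^2 = \sum_g \|\M{Q}_{\mb{G}}^{(k)\top}\M{L}_g^\top\V{v}_i^\ell\|_2^2 = \sum_g \|\M{Q}_{\mb{G}}^{(k)\top}\M{L}_g^\top\M{Q}_{\mb{G}}^{(\ell)}\tfrac{\V{\lambda}_i^\ell}{\|\V{\lambda}_i^\ell\|}\|_2^2 \le \sum_g \|\M{Q}_{\mb{G}}^{(k)\top}\M{L}_g^\top\M{Q}_{\mb{G}}^{(\ell)}\|_F^2$, which is exactly the submodule-affinity quantity appearing in the numerator of \Eqn{main2}; this is where the $\sqrt{d_\ell}$ in the denominator enters, since $\V{v}_i^\ell$ lives in a $d_\ell$-dimensional space and a sharper bound on $\|B\|_F$ exploiting the typical "spread" of $\V{\lambda}_i^\ell/\|\V{\lambda}_i^\ell\|$ over that subspace yields the extra $1/\sqrt{d_\ell}$ factor.

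Finally I would take a union bound over the $g = 1,\dots,N_G$ coordinates (if the per-coordinate route is taken) and over the $j = 1,\dots,N_k$ points in submodule $k$, which produces the prefactor $N_k$ inside the logarithm in $C_1$ and the factor $N_G/(N_k+1)$ in $C_2$; choosing the deviation parameter as $\tau \asymp \sqrt{\log(N_k+1)+\log\Delta+t}$ then makes the failure probability $\lesssim \tfrac{N_G}{(N_k+1)\Delta^2}e^{-2t}$, matching the stated constants after one tracks factors of $2$ and $4$ through the spherical tail bound. The main obstacle is getting the correct scaling with \emph{both} $d_\ell$ and $d_k$ simultaneously: the $1/\sqrt{d_k}$ factor is immediate from spherical concentration of $\V{a}_j$, but the $1/\sqrt{d_\ell}$ factor requires a genuinely two-sided argument — either observing that $\V{v}_i^\ell = \M{Q}_{\mb{G}}^{(\ell)}\V{\lambda}_i^\ell/\|\V{\lambda}_i^\ell\|$ is itself (by the semi-random generation of the points in submodule $\ell$, which determine $\V{\lambda}_i^\ell$ through the dual program) distributed in a near-isotropic way over $\mc{S}^{d_\ell-1}$, so that $\E\|B\|_F^2$ is smaller than the worst-case affinity by a factor $1/d_\ell$, or else running the concentration argument for $\V{v}_i^\ell$ as well and combining the two tail bounds; keeping these two randomness sources properly decoupled (via conditioning) while not losing the affinity structure is the delicate accounting that the rest of the proof must carry out, again following the template of Lemma 7.3--7.5 and the surrounding estimates in \cite{Soltanolkotabi:2012ia}.
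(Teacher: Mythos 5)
Your proposal follows essentially the same route as the paper: the paper's proof is simply the observation that $\V{\lambda}_i^\ell/\|\V{\lambda}_i^\ell\|_2$ is uniformly distributed on $\mc{S}^{d_\ell-1}$ (its Lemma \ref{lem:1}, which is exactly your "near-isotropy of $\V{v}_i^\ell$" alternative, resolved by the rotation-invariance of the semi-random model) combined with Lemma 7.5 of \cite{Soltanolkotabi:2012ia}, which packages precisely your two-sided concentration of $\|\M{B}^\top\V{a}_j\|_2$ around $\|\M{B}\|_F/\sqrt{d_k d_\ell}$ together with the union bound over the $N_k$ points and $N_G$ group elements. Your reconstruction is in fact more explicit than the paper's two-line proof and correctly identifies both randomness sources and the origin of each factor in $C_1$ and $C_2$.
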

\begin{proof}
The proof follows from Lemma \ref{lem:1} below and using Lemma 7.5 in \cite{Soltanolkotabi:2012ia}.
\end{proof}

\begin{lemma} 
\label{lem:1}
Under the semi-random model, $\frac{\V{\lambda}_i^\ell}{\| \V{\lambda}_i^\ell\|_2}$ are uniformly distributed on the unit sphere $\mc{S}^{d_\ell-1}$.
\end{lemma}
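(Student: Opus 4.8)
The plan is to adapt the rotational-invariance argument behind the corresponding statement in \cite{Soltanolkotabi:2012ia} to the group-structured dual program. First I would pass to the coordinates of the submodule $\mb{S}_\ell$. Writing $\M{X}^\ell = \M{Q}_{\mb{G}}^{(\ell)}\M{A}^\ell$ and using that the columns of $\M{Q}_{\mb{G}}^{(\ell)}$ span a $\mb{G}$-invariant subspace, the columns of $\M{L}_g\M{Q}_{\mb{G}}^{(\ell)}$ lie in that same subspace, so $\M{L}_g\M{Q}_{\mb{G}}^{(\ell)} = \M{Q}_{\mb{G}}^{(\ell)}\M{R}_g$ with $\M{R}_g := \M{Q}_{\mb{G}}^{(\ell)\top}\M{L}_g\M{Q}_{\mb{G}}^{(\ell)}$ and $\M{R}_1 = \M{I}$. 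Hence $\M{A}_{-i,\mb{G}}^{\ell} = [\M{R}_1\M{A}_{-i}^{\ell},\ldots,\M{R}_{N_G}\M{A}_{-i}^{\ell}]$, and under the uniqueness assumption the dual optimum $\V{\lambda}_i^\ell \in \Real^{d_\ell}$ is a deterministic function $\V{\lambda}_i^\ell = \Phi(\M{A}^\ell)$ of the reduced data $\M{A}^\ell$, whose columns are i.i.d.\ uniform on $\mc{S}^{d_\ell-1}$.

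The second step is the invariance argument. For $\M{U}\in O(d_\ell)$, replacing $\M{A}^\ell$ by $\M{U}\M{A}^\ell$ does not change the joint law of its columns, so $\Phi(\M{A}^\ell)$ and $\Phi(\M{U}\M{A}^\ell)$ are equal in distribution; if, in addition, the dual program transforms covariantly, i.e.\ $\Phi(\M{U}\M{A}^\ell) = \M{U}\Phi(\M{A}^\ell)$, then the law of $\V{\lambda}_i^\ell$, and a fortiori of $\V{\lambda}_i^\ell/\|\V{\lambda}_i^\ell\|_2$, is $O(d_\ell)$-invariant, which pins the latter down as the uniform law on $\mc{S}^{d_\ell-1}$. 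For ordinary subspace clustering this covariance is automatic, since the dual constraint set $\{\V{\lambda}: \|\M{A}_{-i}^{\ell\top}\V{\lambda}\|_\infty \le 1\}$ is carried exactly to its $\M{U}$-image when the data is rotated by $\M{U}$, so the whole argument closes on the spot.

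The main obstacle is precisely this covariance step, and it is where the group action intervenes. Rotating the data by $\M{U}$ turns each block $\M{R}_g\M{A}_{-i}^\ell$ of $\M{A}_{-i,\mb{G}}^\ell$ into $\M{R}_g\M{U}\M{A}_{-i}^\ell$; pushing this through the constraint $\|\rs((\M{A}_{-i,\mb{G}}^{\ell})^{\top}\V{\lambda})\|_{\infty,2}\le 1$ shows that the change of variables $\V{\lambda}\mapsto\M{U}\V{\lambda}$ recovers the original problem only when $\M{U}$ commutes with every $\M{R}_g$. Thus the bare argument delivers invariance of the law of $\V{\lambda}_i^\ell$ only under the commutant of the restricted representation $\{\M{R}_g\}_{g\in\mb{G}}$ in $O(d_\ell)$, and this commutant need not act transitively on $\mc{S}^{d_\ell-1}$ (a simple instance is $\mb{G}=\mb{Z}_2$ acting on a two-dimensional submodule by $\diag(1,-1)$, for which the commutant is finite). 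To obtain the uniform conclusion in this generality one therefore seems to need either an extra hypothesis --- e.g.\ that $\mb{G}$ acts on $\mb{S}_\ell$ by a scalar, so that $\M{R}_g = \pm\M{I}$ and the commutant is all of $O(d_\ell)$ --- or to replace ``uniform on $\mc{S}^{d_\ell-1}$'' by the appropriate commutant-invariant law and propagate this change into the use of Lemma~7.5 of \cite{Soltanolkotabi:2012ia} in the proof of \Thm{main}. Settling on the correct hypothesis here is the part I expect to require the most care.
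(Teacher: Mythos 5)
The paper's entire proof of this lemma is the sentence ``the result follows along the same lines as the Proof of Step 2, Section 7.2.2 in \cite{Soltanolkotabi:2012ia}''; your first two paragraphs are a faithful reconstruction of exactly that intended argument (reduce to the coordinates $\M{A}^\ell$, use that the columns are i.i.d.\ uniform on $\mc{S}^{d_\ell-1}$, and invoke orthogonal covariance of the dual optimum). So you are on the paper's route, not a different one.

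The obstruction you isolate in your third paragraph is genuine, and the paper does not address it. Writing $\M{R}_g = \M{Q}_{\mb{G}}^{(\ell)\top}\M{L}_g\M{Q}_{\mb{G}}^{(\ell)}$, the $j$-th row of $\rs(\M{A}_{-i,\mb{G}}^{\ell\top}\V{\lambda})$ after the data rotation $\M{A}^\ell\mapsto\M{U}\M{A}^\ell$ and the change of variables $\V{\lambda}=\M{U}\V{\mu}$ has squared norm $\sum_g \bigl(\V{a}_j^\top\M{U}^\top\M{R}_g^\top\M{U}\V{\mu}\bigr)^2$, which coincides with the original $\sum_g \bigl(\V{a}_j^\top\M{R}_g^\top\V{\mu}\bigr)^2$ for all data only when $\M{U}$ commutes with every $\M{R}_g$. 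Hence the argument of \cite{Soltanolkotabi:2012ia} yields invariance of the law of $\V{\lambda}_i^\ell/\|\V{\lambda}_i^\ell\|_2$ only under the orthogonal commutant of the restricted representation $\{\M{R}_g\}_{g\in\mb{G}}$, and your $\mb{G}=\mb{Z}_2$, $\M{R}_2=\diag(1,-1)$ example correctly shows this commutant can be finite, hence far from transitive on $\mc{S}^{d_\ell-1}$. The lemma as stated is therefore not established by the cited argument except in special cases (e.g.\ when every $\M{R}_g=\pm\M{I}$, so the constraint set is $O(d_\ell)$-invariant and the SSC proof applies verbatim). Your proposed repairs --- either add a hypothesis forcing the commutant to be large, or weaken the conclusion to commutant-invariance and track the consequence through the use of Lemma 7.5 of \cite{Soltanolkotabi:2012ia} in Theorem~\ref{thm:main} (where uniformity is what converts $\|\M{Q}_{\mb{G}}^{(k)\top}\M{L}_g^\top\M{Q}_{\mb{G}}^{(\ell)}\V{v}_i^\ell\|_2$ into the Frobenius-norm affinity divided by $\sqrt{d_\ell}$) --- are exactly the right ways to close the gap; the paper itself offers neither.
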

\begin{proof}
The result follows along the same lines as in the Proof of Step 2, section 7.2.2. in \cite{Soltanolkotabi:2012ia}.
\end{proof}

Therefore, for a suitable choice of $\Delta$ if the condition,
\begin{align}
C_1(t, \Delta) \frac{\sqrt{\sum_{g}\|\M{Q}_{\mb{G}}^{(k)\top} \M{L}_{g}^{\top} \M{Q}_{\mb{G}}^{(\ell)} \|_{F}^{2}}}{\sqrt{d_\ell} \sqrt{d_k}} \leq r(P^{\circ})\,\, ,
\end{align}
is satisfied then then the SSmC algorithm correctly identifies (pair-wise) the subspaces. Using a union bounding argument over all the submodules one can see that under sub-module incoherence SSmC algorithm clusters the points correctly with high probability.

\subsection{Submodule Vs Subspace clustering}

Let us now compare the condition derived above for submodule clustering with the case when one \emph{doesn't exploit any knowledge of the group with respect to which, the subspaces are invariant}. Using the result in \cite{Soltanolkotabi:2012ia} we note the following condition for correct (pair-wise) subspace identification, 
\begin{align}
C_1(t, \Delta) \frac{\sqrt{\|\M{Q}_{\mb{G}}^{(k)\top} \M{Q}_{\mb{G}}^{(\ell)} \|_{F}^{2}}}{\sqrt{d_\ell} \sqrt{d_k}} \leq r(P_{ssc}^{\circ})\,\, ,
\end{align}
where $$P_{ssc}^{\circ} = \{\tilde{\V{z}}: \tilde{\V{z}} =  \M{A}_{-i}^{\ell \top} \tilde{\V{b}},\,\, \| \tilde{\V{b}} \|_{1} \leq 1\}.$$
Now whether using the sparse sub-module clustering algorithm is useful, depends on whether how large the in-radius $r(P^{\circ})$ is, compared to $r(P_{ssc}^{\circ})$ and this in turn depends on the group as well as the number of points $N_\ell$ per subspace. 

With $\Delta = N_\ell L$ and if $N_\ell \approx d_\ell$ then following the results in \cite{Soltanolkotabi:2012ia}, we have the following lower bound on the in-radius of the set $r(P_{ssc}^{\circ})$,
\begin{align}
r(P_{ssc}^{\circ}) \geq c_0 \sqrt{\frac{\log \frac{N_\ell}{d_\ell}}{d_\ell}}\, .
\end{align}
for some fixed positive constant $c_0$. So compared to this lower bound, for SSmC to be beneficial over SSC, the in-radius $r(P^\circ)$ must scale with $N_G$. This fact may appear intuitive as we are adding volume to the sysmmteric body by adding the \emph{spherical caps} to the polytope, see Figure \ref{fig:geom}. We therefore have the following \textbf{conjecture} for a lower bound on the in-radius of $P^\circ$.

\begin{lemma} -- [\textbf{Conjecture}]
If $\M{L}_g$ are unitary and if for any random vector $\V{a}$ the matrix $[ \M{L}_1 \V{a}, \M{L}_2 \V{a},...,\M{L}_{N_G} \V{a} ]$ is full rank and if $d_\ell = \beta_\ell N_G, \beta_\ell > 1$ for all $\ell \in \{1,2,...,L\}$, then 
\begin{align}
r(P^{\circ}) \geq c_0 \sqrt{N_G} \sqrt{\frac{\log \frac{N_\ell}{d_\ell}}{d_\ell}}
\end{align}
\end{lemma}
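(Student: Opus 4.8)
The plan is to pass to the support-function description of the in-radius, reduce the claim to an anti-concentration bound for a quadratic form, and run the $\epsilon$-net argument of \cite{Soltanolkotabi:2012ia}, while tracking the extra gain produced by the ellipsoidal ``caps'' that the group action contributes.

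\emph{Step 1 (geometric reformulation).} Set $\M{R}_g := \M{Q}_{\mb{G}}^{(\ell)\top}\M{L}_g\M{Q}_{\mb{G}}^{(\ell)}$. Because $\M{L}_g$ is unitary and $\mathrm{span}(\M{Q}_{\mb{G}}^{(\ell)})$ is $\mb{G}$-invariant, $\{\M{R}_g\}$ is an orthogonal representation of $\mb{G}$ on $\Real^{d_\ell}$ with $\M{R}_1=\M{I}$. Grouping the entries of $\V{b}$ into per-point blocks $\V{b}_j\in\Real^{N_G}$ (so that $\|\rs(\V{b})\|_{1,2}=\sum_j\|\V{b}_j\|_2$) and writing $\M{B}_j:=[\,\M{R}_1\V{a}_j^\ell,\dots,\M{R}_{N_G}\V{a}_j^\ell\,]\in\Real^{d_\ell\times N_G}$ with $\V{a}_j^\ell$ ranging over the columns of $\M{A}_{-i}^\ell$, one gets
\begin{align*}
P^{\circ}\;=\;\Big\{\textstyle\sum_j\M{B}_j\V{b}_j:\ \sum_j\|\V{b}_j\|_2\le 1\Big\}\;=\;\mathrm{conv}\Big(\textstyle\bigcup_j\M{B}_j\,\mc{S}^{N_G-1}\Big),
\end{align*}
a centro-symmetric convex hull of $(N_G-1)$-dimensional ellipsoidal caps, whose support function is $h_{P^{\circ}}(\V{w})=\max_j\|\M{B}_j^{\top}\V{w}\|_2$. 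Hence
\begin{align*}
r(P^{\circ})\;=\;\min_{\|\V{w}\|_2=1}\ \max_j\ \Big(\textstyle\sum_g\la\V{a}_j^\ell,\M{R}_g^{\top}\V{w}\ra^2\Big)^{1/2}.
\end{align*}
With $\M{M}_\V{w}:=\sum_g(\M{R}_g^{\top}\V{w})(\M{R}_g^{\top}\V{w})^{\top}\succeq 0$ we have $\|\M{B}_j^{\top}\V{w}\|_2^2=\V{a}_j^{\ell\top}\M{M}_\V{w}\V{a}_j^\ell$ and $\tr(\M{M}_\V{w})=\sum_g\|\M{R}_g^{\top}\V{w}\|_2^2=N_G$, so $\mathbb{E}_{\V{a}_j^\ell}\|\M{B}_j^{\top}\V{w}\|_2^2=N_G/d_\ell$. (Keeping only the $g=1$ coordinate, where $\M{R}_1=\M{I}$, shows $P_{ssc}^{\circ}\subseteq P^{\circ}$, so $r(P^{\circ})\ge r(P_{ssc}^{\circ})$ trivially; the content of the conjecture is the factor $\sqrt{N_G}$, i.e.\ that a typical cap reaches distance $\asymp\sqrt{N_G/d_\ell}$ in every direction rather than $1/\sqrt{d_\ell}$ for a bare vertex.)

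\emph{Step 2 (the $\sqrt{\log(N_\ell/d_\ell)}$ enhancement).} For each fixed unit $\V{w}$ I would prove a lower-tail estimate $\Pr\{\max_j\|\M{B}_j^{\top}\V{w}\|_2\le t\}\le(1-p)^{N_\ell}$ with $t\asymp\sqrt{(N_G/d_\ell)\log(N_\ell/d_\ell)}$ and $p$ large enough that $N_\ell p$ dominates the net cardinality below; this is a statement about the heaviness of the right tail of the quadratic form $\V{a}_j^{\ell\top}\M{M}_\V{w}\V{a}_j^\ell$, and the hypothesis that $[\M{L}_1\V{a},\dots,\M{L}_{N_G}\V{a}]$ is full rank (equivalently, $\{\M{R}_g\V{a}_j^\ell\}_g$ linearly independent for generic $\V{a}_j^\ell$) keeps $\M{M}_\V{w}$ nondegenerate, so the form genuinely occupies an $N_G$-dimensional subspace. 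I would then cover $\mc{S}^{d_\ell-1}$ by an $\epsilon$-net of size $(3/\epsilon)^{d_\ell}$, use that $\V{w}\mapsto h_{P^{\circ}}(\V{w})$ is $\big(\max_j\|\M{B}_j\|_{\mathrm{op}}\big)$-Lipschitz with $\|\M{B}_j\|_{\mathrm{op}}\le\|\M{B}_j\|_F=\sqrt{N_G}$, and union-bound over the net; as in \cite{Soltanolkotabi:2012ia}, the net exponent $d_\ell\log(1/\epsilon)$ is precisely what downgrades a per-direction $\log N_\ell$ to $\log(N_\ell/d_\ell)$. Choosing $\epsilon\asymp\sqrt{\log(N_\ell/d_\ell)/(N_G d_\ell)}$ and invoking $d_\ell=\beta_\ell N_G$, $\beta_\ell>1$, would yield $r(P^{\circ})\ge c_0\sqrt{N_G}\,\sqrt{\log(N_\ell/d_\ell)/d_\ell}$ with high probability.

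\textbf{The main obstacle.} Everything rests on the lower-tail bound of Step 2, and that is the genuinely hard point (hence the conjectural status). The right tail of $\V{a}_j^{\ell\top}\M{M}_\V{w}\V{a}_j^\ell$ is governed by the singular spectrum of $\M{M}_\V{w}$, which varies with $\V{w}$: when that spectrum is flat --- mass $N_G$ spread over all coordinates, as happens for $\V{w}$ with $\V{w}^{\top}\M{R}_h\V{w}\approx 0$ for $h\neq 1$ --- the form concentrates tightly around $N_G/d_\ell$ (behaving like $\tfrac1{d_\ell}\chi^2_{N_G}$), so $\max_j$ over $N_\ell$ samples gains essentially no logarithmic factor and one only recovers $r(P^{\circ})\gtrsim\sqrt{N_G/d_\ell}$; a dominant eigenvalue, on the other hand, produces the heavy $e^{-\Theta(t^2 d_\ell/N_G)}$ tail that the bound needs. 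One must therefore control the spectrum of $\M{M}_\V{w}$ at the minimizing $\V{w}$, or trade the two regimes off against each other --- which is exactly the ``related problem in geometric functional analysis'' the abstract mentions: a sharp in-radius bound for the symmetric convex hull of many randomly placed congruent flat ellipsoids. A sensible first target is the cyclic shift group, where all $\M{R}_g$ are simultaneously diagonalized by the DFT, $\M{M}_\V{w}$ is circulant with eigenvalues the Fourier coefficients of $|\widehat{\V{w}}|^2$, and the quadratic-form tail becomes explicit in terms of the spectral flatness of $\V{w}$; settling that case should reveal what additional condition on $\{\M{L}_g\}$ (beyond full rank) the general statement actually requires.
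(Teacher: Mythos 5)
You should be aware that the paper contains no proof of this statement: it is explicitly labelled a conjecture, and the authors state they ``are not able to prove or disprove this result'' and leave it as an open problem. So there is no argument of theirs to compare yours against, and your proposal should be judged on its own terms. On those terms, your Step 1 is correct and is genuinely more than the paper writes down: the identities $P^{\circ}=\mathrm{conv}\bigl(\bigcup_j \M{B}_j\,\mc{S}^{N_G-1}\bigr)$, $h_{P^\circ}(\V{w})=\max_j\|\M{B}_j^{\top}\V{w}\|_2$, the reduction of the in-radius to $\min_{\|\V{w}\|=1}\max_j\bigl(\V{a}_j^{\ell\top}\M{M}_{\V{w}}\V{a}_j^{\ell}\bigr)^{1/2}$ with $\tr(\M{M}_{\V{w}})=N_G$ (using that the restrictions $\M{R}_g$ of the unitary $\M{L}_g$ to the invariant subspace are orthogonal), and the trivial inclusion $P_{ssc}^{\circ}\subseteq P^{\circ}$ all check out. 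This is a clean formalization of the ``polytope with spherical caps'' picture the paper only gestures at in Figure 1.

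However, the proposal is not a proof, and you say so yourself: Step 2's lower-tail estimate for $\max_j \V{a}_j^{\ell\top}\M{M}_{\V{w}}\V{a}_j^{\ell}$, uniformly over the worst direction $\V{w}$, is exactly the open problem, and no argument is supplied for it. More importantly, your own ``main obstacle'' paragraph quietly contains evidence \emph{against} the conjecture as stated, and you should draw that conclusion explicitly: if there exists a unit direction $\V{w}_0$ for which the spectrum of $\M{M}_{\V{w}_0}$ is flat, then $h_{P^\circ}(\V{w}_0)$ concentrates near $\sqrt{N_G/d_\ell}$ with only an additive $O(\sqrt{\log N_\ell/d_\ell})$ gain from the maximum, and since $r(P^\circ)\le h_{P^\circ}(\V{w}_0)$ this \emph{upper}-bounds the in-radius by roughly $\sqrt{N_G/d_\ell}$ --- which is smaller than the conjectured $c_0\sqrt{N_G\log(N_\ell/d_\ell)/d_\ell}$ once $N_\ell/d_\ell$ is large. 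So either the full-rank hypothesis must be strengthened to rule out spectrally flat directions, or the conjecture needs a regime restriction on $N_\ell/d_\ell$; your cyclic-group/DFT test case is the right place to settle which. As a minor point, the constant $c_0$ and the failure probability in Step 2 are never pinned down, and the claimed choice of $\epsilon$ is asserted rather than derived; but these are secondary to the unproved (and possibly false as stated) tail bound.
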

At this juncture we are not able to prove or disprove this result by using existing results known for random polytopes \cite{Dafnis:2009hf,AlonsoGutierrez:2008tm,Pivovarov:2010fd}. The set $P^\circ$ is not a polytope but is formed out of a polytope with added smooth caps, see Figure \ref{fig:geom}. We leave this as an open problem to be solved in future. 

If this conjecture is true, then the benefits of SSmC are immediately clear. This is because the submodule incoherence compares the incoherence of the subspaces under all group actions and overall this averaged criteria helps distinguishability.

In the next section we will show the effectiveness of the proposed algorithm on some real data sets. The superior performance of the SSmC algorithm on one of these sets served as the primary motivation for the analysis carried out in this paper.  

\begin{figure}[htbp]	
\centering
 	\includegraphics[height= 1.5in, width = 3 in]{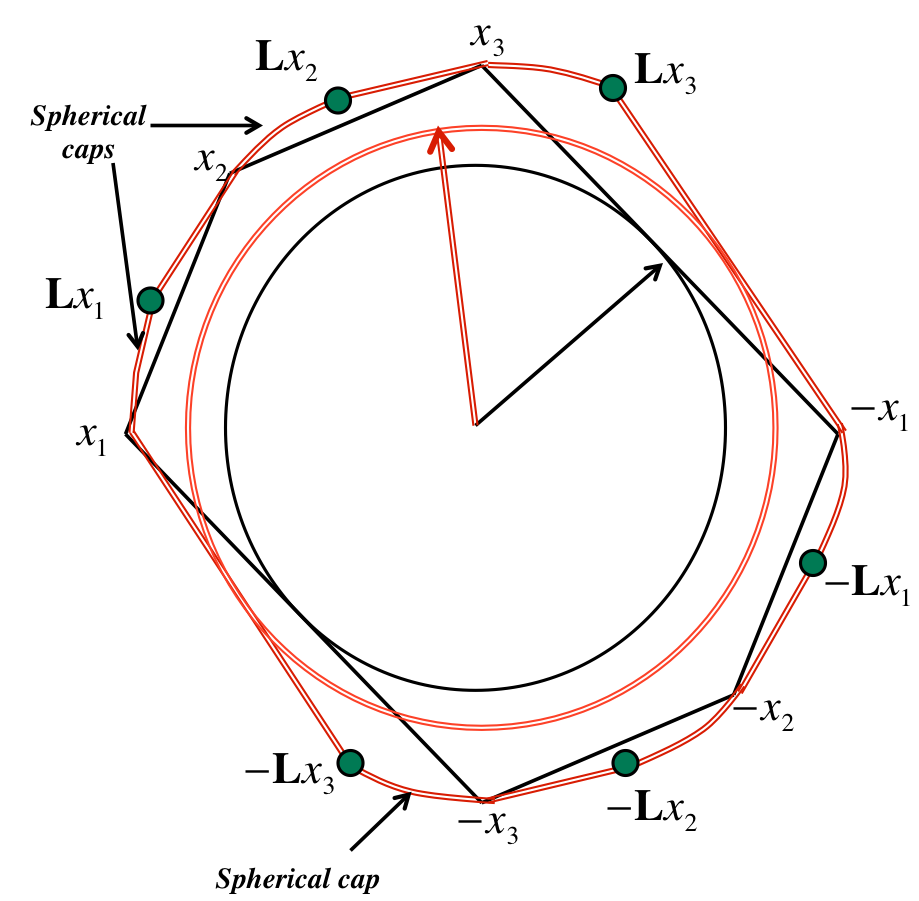}
 	\caption{Geometry of submodule clustering. $N_G$ more extreme points are added with spherical caps connecting the group of points generated by a data point. The conjecture states that due to the addition of spherical caps, the in-radius increases. This example figure depicts the increase in in-radius as a result of adding points for the set $P^\circ$ (shown in double red-lines) compared to the in-radius of the set $P_{ssc}^{\circ}$ (shown in single solid black lines).}
	\label{fig:geom}
	\end{figure} 

 

\section{Simulations}
For simulations on real data we solve the following,
\begin{align}
&\min_{\V{c}_i} \| \rs(\V{c}_i)\|_{1,2} + \lambda \| \M{X}_{-i,\mb{G}}\V{c}_i - \M{X}_i\|_{F}^{2}  
\end{align}
where $\lambda > 0 $ is the regularization parameter that penalizes the model mismatch. In the following we exhaustively search for the optimal $\lambda$ that yields the best clustering performance.

\label{sec:sims} 
\subsection{Yale database}
We now compare the performance of the SSmC algorithm from \cite{Kernfeld:2014wta} on the cropped Yale Face database -- http://vision.ucsd.edu/~leekc/ExtYaleDatabase/ExtYaleB.html, see Figure \ref{fig:YaleBCrop} for example images. For the experiments the down-sampled data base is directly taken from \cite{Elhamifar:2012uz} available through the authors' website. 

\textbf{Note}: that in \cite{Elhamifar:2012uz}, the authors use a outlier rejection version of SSC for clustering the faces. While one can also add a provision for sparse outlier rejection in SSmC we will not use it here. The main reason is that it will require us to select and optimize over two regularization parameters, which makes the problem of searching for the optimal parameters computationally challenging.

In Table \ref{tab:yale} we present the error performance for clustering $5$ subjects for cases when using $20,25,30$ images per person that are randomly chosen from the $64$ images for each. Note that the clustering performance of SSC is slightly better than SSmC but the difference in performance is not much. In this case since the data is extremely well represented by union of subspaces, the results agree well with the observation that SSC is a special case of SSmC in this case when the data is very close to union of subspaces there is perhaps no gain in using the SSmC approach.

\begin{figure*}	
\centering
 	\includegraphics[height= 1.5in, width = 4 in]{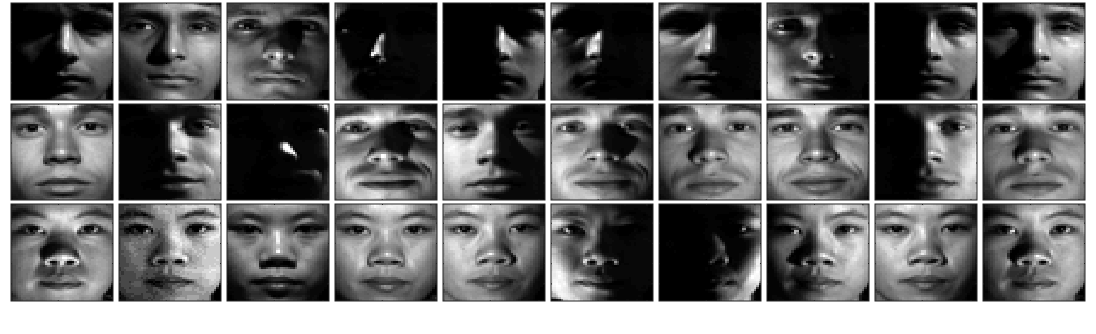}
 	\caption{Examples of faces from 3 subjects from the Cropped YaleB data set taken under different lighting conditions. Note that centering and alignment is done on the images to focus mainly on the face and there is no pose variation.}
	\label{fig:YaleBCrop}
	\end{figure*}   

\begin{table*}
\begin{center}
\begin{tabular}{ |c|c| c|c| }
\hline
 SSC & $20$ images per sub. & $25$ images per sub. & $30$ images per sub. \\
 \hline 
 SSmC & 18.60 \% & 18.56 \% & 14.93 \% \\  
 \hline
 SSC & 17.40 \% & 18.40 \% & 14.53 \% \\
 \hline
\end{tabular}
\vspace{2mm}
\caption{Error in clustering for the SSC and SSmC methods for the YaleBCrop data \cite{Elhamifar:2012uz}. The number of subjects is $5$. The errors are averaged over $5$ instances of randomly selecting the indicated number of images per subject.}
\label{tab:yale}
\end{center} 
\end{table*}

\subsection{Weizmann Face database}
We next perform simulations on the Weizmann face database -- http://www.wisdom.weizmann.ac.il/~vision/FaceBase/. Figure \ref{fig:Weizmann} shows examples of images of subjects taken under different lighting and pose conditions. For each image $\M{X}$ reduce the original size of the images by first downsampling by factor of $4$, followed by cropping keeping the indices $[x,y] \in [ 1:120, 1:80]$. In Table \ref{tab:weizmann} we show the performance of SSmC Vs SSC in clustering for several cases. Note that due to variations in pose, the data is well modeled using the submodules with the group of cyclic shifts along the horizontal direction and we see that SSmC performs much better compared to SSC in clustering performance. 

\begin{figure*}	
\centering
 	\includegraphics[height= 2in, width = 4.5 in]{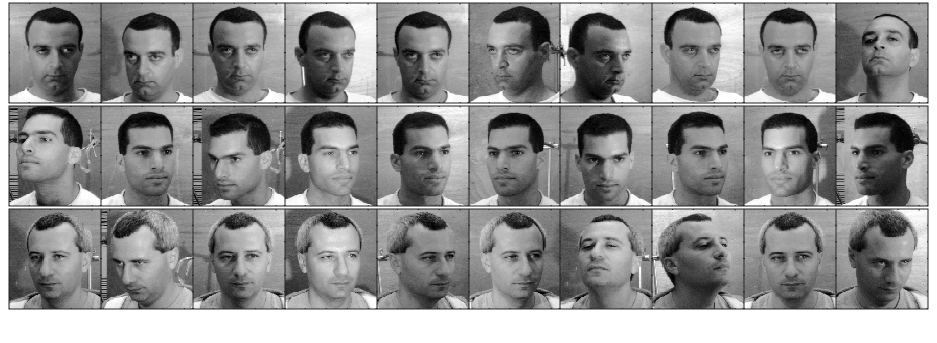}
 	\caption{\label{fig:Weizmann} Examples of faces from 3 subjects from the Weizmann data set taken under different lighting and pose conditions. In this database, the minor variations in the pose can be approximately captured by shifts along the horizontal direction.}
	\end{figure*}   

\begin{table*}
\begin{center}
\begin{tabular}{ |c|c| c|c| }
\hline
 SSC & $20$ images per sub. & $25$ images per sub. & $35$ images per sub. \\
 \hline 
 SSmC & 27.50 \% & 11 \%  & 13.29 \% \\  
 \hline
 SSC & 46.75 \% & 45.60\% & 49.57\% \\
 \hline
\end{tabular}
\vspace{2mm}
\caption{Error in clustering for the SSC and SSmC methods for the Weizmann data. The number of subjects is $4$. The errors are averaged over $5$ instances of randomly selecting (without replacement) the indicated number of images per subject.}
\label{tab:weizmann}
\end{center} 
\end{table*}

\section{Conclusions and future work}
\label{sec:conclude}

The approach presented in this paper is quite general and can be extended to rotation invariant clustering, by first taking the Radon transform and then using the shift-invariant approach described above in the Radon domain. This is because rotations corresponds to shifts along the angle axis in the Radon transform \cite{Chen:2012eo}. Finally one can also combine the two methods by considering direct product of the two or more groups. However, note that the computational complexity of the algorithm also increases with the group size and the size of the data. An important aspect of the future work will be to look at computationally efficient approaches similar to thresholded subspace clustering \cite{Heckel_ArXiv13} using the re-definition of angle between subspace elements to angle between submodule elements. 

\section{Acknowledgements}
Shuchin Aeron would like to acknowledge the support by NSF through grant CCF:1319653. 

%


\bibliographystyle{IEEEbib}
\bibliography{SSmC_Geom,SSmCbibliography}

\end{document}